\newcommand{\argmin}{\mathop{\mathrm{argmin}}}
\newcommand{\norm}[1]{\left\Vert#1\right\Vert}
\begin{document}
\title{Consensus Neural Network for Medical Imaging Denoising with Only Noisy Training Samples}
\titlerunning{Consensus Network}
%
\author{Dufan Wu\inst{1,2} \and
Kuang Gong\inst{1,2} \and
Kyungsang Kim\inst{1,2} \and
Quanzheng Li\inst{1,2}
}
\authorrunning{Dufan Wu et al.}
%
\institute{\inst{} Center for Advanced Medical Computing and Analysis, Massachusetts General Hospital and Harvard Medical School, Boston, MA 02114, USA \and
\inst{} Gordon Center for Medical Imaging, Massachusetts General Hospital and Harvard Medical School, Boston, MA 02114, USA}
\maketitle              
\begin{abstract}
Deep neural networks have been proved efficient for medical image denoising. Current training methods require both noisy and clean images. However, clean images cannot be acquired for many practical medical applications due to naturally noisy signal, such as dynamic imaging, spectral computed tomography, arterial spin labeling magnetic resonance imaging, etc. In this paper we proposed a training method which learned denoising neural networks from noisy training samples only. Training data in the acquisition domain was split to two subsets and the network was trained to map one noisy set to the other. A consensus loss function was further proposed to efficiently combine the outputs from both subsets. A mathematical proof was provided that the proposed training scheme was equivalent to training with noisy and clean samples when the noise in the two subsets was uncorrelated and zero-mean. The method was validated on Low-dose CT Challenge dataset and NYU MRI dataset and achieved improved performance compared to existing unsupervised methods. 

\keywords{Image denoising  \and Neural network \and Unsupervised learning.}
\end{abstract}
\section{Introduction}
Deep neural network has been proved efficient for noise and artifacts reduction in medical imaging reconstruction \cite{jin2017deep}\cite{kang2017deep}. Despite of the superior image quality achieved by neural networks compared to handcrafted prior functions \cite{sidky2008image}, almost all the neural network based methods require both noisy and clean images during the training. However, such clean images are not always accessible due to naturally noisy signals acquired in many medical applications. Dynamic imaging, including dynamic positron emission tomography (PET), dynamic magnetic resonance (MR), and computed tomography (CT) perfusion, acquires signals with rapid temporal change, and the signal quality is limited due to short acquisition time. Spectral CT has very noisy material images due to the ill-posed decomposition procedure \cite{Niu2014Iterative}. Arterial spin labeling (ASL) MR also has noisy images due to the low efficiency in labeling arterial blood with magnetic field, which results in noisy signals emitted by the labeled blood \cite{Bibic2010denoising}. 

A recent work proposed in 2018, Noise2noise \cite{pmlr-v80-lehtinen18a}, demonstrated that denoising networks can be learned by mapping a noisy image to another noisy realization of the same image, and the performance was similar to that using noisy-clean pairs. However, it requires at least two noise realizations for each training sample, which is not readily available for most medical images. Even if two noise realizations are given, Noise2noise framework can only effectively use one of them, which degraded the achievable image quality. Last but not least, \cite{pmlr-v80-lehtinen18a} did not clarify the conditions for Noise2noise to work, which can be problematic for medical imaging due to the complicated noise characteristics. 

In this work we proposed a consensus network for medical imaging which required only noisy data for training. Our consensus network was inspired by Noise2noise but with major improvements for medical imaging. The Noise2noise framework was first analyzed with a newly proposed mathematical theorem, which further clarified its applicable condition. Based on conditions derived from the theorem, the acquired signals were split to two sets to reconstruct images with different noise realizations. The denoising network was then trained to map one noise realization to the other, with a novel loss function which efficiently aggregated both noise realizations during testing time. The proposed method was evaluated on Low-dose CT (LDCT) Challenge dataset \cite{aapm2017low} for quarter-dose CT image denoising, and New York University (NYU) MR dataset \cite{hammernik2018learning} for 4$\times$ undersampling parallel imaging. Results from both datasets demonstrated improved performance compared to the original Noise2noise framework and iterative reconstruction methods. 

\section{Methodology}
We will first provide proof for Noise2Noise training, then derive loss function for the proposed consensus network based on our theorem. 
\subsection{Noise2noise Training}
Given paired clean and noisy images $\mathbf{x}_i, \mathbf{x}_i + \mathbf{n}_i \in \mathbb{R}^n$, conventional method to train denoising network under L2-loss is:
\begin{equation}\label{noise2clean}
\mathbf{\Theta}_c = \argmin_\mathbf{\Theta}\frac{1}{N}\sum_{i=1}^N\norm{f(\mathbf{x}_i + \mathbf{n}_i;\mathbf{\Theta}) - \mathbf{x}_i}_2^2,
\end{equation}
where $f(\mathbf{x};\mathbf{\Theta}): \mathbb{R}^n \rightarrow \mathbb{R}^n$ is the denoising neural network and $\mathbf{\Theta} \in \mathbb{R}^m$ is the parameters to be trained. Equation (\ref{noise2clean}) is referred as Noise2clean training. 

Noise2noise training uses two independent noise realizations of each sample for the training:
\begin{equation}\label{noise2noise}
\mathbf{\Theta}_n = \argmin_\mathbf{\Theta}\frac{1}{N}\sum_{i=1}^N\norm{f(\mathbf{x}_i + \mathbf{n}_{i1};\mathbf{\Theta}) - (\mathbf{x}_i+\mathbf{n}_{i2})}_2^2,
\end{equation}
where $\mathbf{n}_{i1}, \mathbf{n}_{i2} \in \mathbb{R}^n$ are two independent noise samples. 

The equivalence between Noise2noise (\ref{noise2noise}) and Noise2clean (\ref{noise2clean}) is guaranteed by theorem \ref{theorem_eq}, which is one of our main contributions:
\begin{theorem}\label{theorem_eq}
If conditional expectation $E\{\mathbf{n}_{i2}| \mathbf{x}_i+\mathbf{n}_{i1}\} = \mathbf{0}$ $\forall$ $\mathbf{\Theta} \in \mathbb{R}^m$ and $i$, then $\lim_{N\rightarrow\infty}\mathbf{\Theta}_n = \mathbf{\Theta}_c$.
\end{theorem}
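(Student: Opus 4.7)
The plan is to expand the squared norm inside the Noise2noise objective \eqref{noise2noise} and argue that, in the $N\to\infty$ limit, it differs from the Noise2clean objective \eqref{noise2clean} only by an additive constant in $\mathbf{\Theta}$; the two minimizers will then coincide. Specifically, I would write
\[
\norm{f(\mathbf{x}_i+\mathbf{n}_{i1};\mathbf{\Theta}) - (\mathbf{x}_i+\mathbf{n}_{i2})}_2^2 = \norm{f(\mathbf{x}_i+\mathbf{n}_{i1};\mathbf{\Theta}) - \mathbf{x}_i}_2^2 - 2\mathbf{n}_{i2}^\top\bigl(f(\mathbf{x}_i+\mathbf{n}_{i1};\mathbf{\Theta})-\mathbf{x}_i\bigr) + \norm{\mathbf{n}_{i2}}_2^2,
\]
average over $i$, and split the limit into three pieces. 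The first averaged piece is exactly the N2C loss, the third piece is independent of $\mathbf{\Theta}$, so the whole argument reduces to showing that the cross-term average vanishes.

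Next, by the strong law of large numbers the averaged cross term converges almost surely to $-2\,E\bigl[\mathbf{n}_{i2}^\top(f(\mathbf{x}_i+\mathbf{n}_{i1};\mathbf{\Theta}) - \mathbf{x}_i)\bigr]$. Treating $\mathbf{x}_i$ as deterministic and conditioning on $\mathbf{x}_i+\mathbf{n}_{i1}$, the vector $f(\mathbf{x}_i+\mathbf{n}_{i1};\mathbf{\Theta}) - \mathbf{x}_i$ is measurable with respect to the conditioning sigma-algebra and can be pulled out of the inner expectation, giving
\[
E\bigl[\mathbf{n}_{i2}^\top(f-\mathbf{x}_i)\bigr] = E\bigl[E\{\mathbf{n}_{i2}\mid \mathbf{x}_i+\mathbf{n}_{i1}\}^\top(f-\mathbf{x}_i)\bigr] = 0
\]
by the hypothesis. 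Hence the limiting N2N and N2C objectives differ only by a $\mathbf{\Theta}$-independent constant, so they share the same argmin.

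The main obstacle will be the final step: equality of the limiting cost functions does not by itself yield $\mathbf{\Theta}_n \to \mathbf{\Theta}_c$, since $\argmin$ is not continuous in general. A standard M-estimator consistency argument (uniform law of large numbers over a compact $\mathbf{\Theta}$-region, continuity of $f(\cdot;\mathbf{\Theta})$ in $\mathbf{\Theta}$, finite second moments of the noise, and well-separatedness of the population minimizer) is what upgrades pointwise convergence of the empirical losses into convergence of their minimizers. I expect the paper to handle this implicitly and to focus the written argument on the cross-term identity, which is the genuinely novel content of the theorem.
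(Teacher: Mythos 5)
Your proposal is correct and follows essentially the same route as the paper: the same expansion of the squared norm into a Noise2clean term, a cross term, and a $\mathbf{\Theta}$-independent remainder, the same passage from the empirical average of the cross term to its expectation (the paper invokes the Lindeberg--Levy central limit theorem where you use the strong law of large numbers, an immaterial difference), and the same conditional-expectation argument using the hypothesis to make that expectation vanish. You also correctly anticipated that the paper leaves the final step---upgrading convergence of the loss functions to convergence of their minimizers---implicit rather than supplying the M-estimator consistency argument you sketch.
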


\begin{proof}
Let $\mathbf{y}_i := f(\mathbf{x}_i + \mathbf{n}_{i1};\mathbf{\Theta})$ and expand the loss function in (\ref{noise2noise}):
\begin{equation}\label{expand}
\begin{split}
&\frac{1}{N}\sum_{i=1}^N\norm{\mathbf{y}_i - (\mathbf{x}_i+\mathbf{n}_{i2})}_2^2 \\
= &\frac{1}{N}\sum_{i=1}^N\norm{\mathbf{y}_i - \mathbf{x}_i}_2^2 - \frac{1}{N}\sum_{i=1}^N 2\mathbf{n}_{i2}^T\mathbf{y}_i + \frac{1}{N}\sum_{i=1}^N(\mathbf{n}_{i2}^T\mathbf{n}_{i2} + 2\mathbf{n}_{i2}^T\mathbf{x}_i)
\end{split}
\end{equation}
The first term is Noise2clean loss (\ref{noise2clean}) and the last term is irrelevant to $\mathbf{\Theta}$. 

For the second term, according to Lindeberg-Levy central limit theorem:
\begin{equation}\label{clt}
\sqrt{N}( \frac{1}{N}\sum 2\mathbf{n}_{i2}^T\mathbf{y}_i - E\{ 2\mathbf{n}_{i2}^T\mathbf{y}_i \}) \xrightarrow{d} N(0, Var\{ 2\mathbf{n}_{i2}^T\mathbf{y}_i \}),
\end{equation}
where $Var\{ x \}$ is the variance of $x$, and $N(0, \sigma^2)$ is a normal distribution with variance $\sigma^2$. As $Var\{ 2\mathbf{n}_{i2}^T\mathbf{y}_i \}$ is finite, the second term in (\ref{expand}) will converge to $E\{ 2\mathbf{n}_{i2}^T\mathbf{y}_i \}$ as $N\rightarrow\infty$.

The expectation can be written as conditional expectation:
\begin{equation}\label{condE}
E\{2\mathbf{n}_{i2}^T\mathbf{y}_i\} = 2E\{E^T\{\mathbf{n}_{i2} | \mathbf{y}_i\} \mathbf{y}_i\} = 2E\{E^T\{\mathbf{n}_{i2} | \mathbf{x}_i + \mathbf{n}_{i1}\} f(\mathbf{x}_i+\mathbf{n}_{i1};\mathbf{\Theta})\},
\end{equation}
the last equivalence was due to that $\mathbf{y}_i = f(\mathbf{x}_i+\mathbf{n}_{i1};\mathbf{\Theta})$ was deterministic.

Equation (\ref{condE}) equals to 0 given the theorem's assumption, which lead to diminishing second term in (\ref{expand}) as $N \rightarrow \infty$. As the third term was irrelevant to $\mathbf{\Theta}$, we have:
\begin{equation}
\argmin_\mathbf{\Theta}\frac{1}{N}\sum_{i=1}^N\norm{\mathbf{y}_i - \mathbf{x}_i}_2^2 = \argmin_\mathbf{\Theta}\frac{1}{N}\sum_{i=1}^N\norm{\mathbf{y}_i - (\mathbf{x}_i+\mathbf{n}_{i2})}_2^2,
\end{equation}
as $N \rightarrow \infty$, which implies $\mathbf{\Theta}_n = \mathbf{\Theta}_c$. $\square$
\end{proof}

\subsection{Consensus Loss Function}
The key to Noise2noise training is finding independent and zero-mean noise realizations $\mathbf{n}_{i1}$ and $\mathbf{n}_{i2}$. For medical imaging, it can be achieved by splitting the measurement to independent sets which are reconstructed separately. For example, in low-dose CT,  $\mathbf{x}_i+\mathbf{n}_{i1}$ and $\mathbf{x}_i+\mathbf{n}_{i2}$ can be images reconstructed from odd and even projections respectively. 

The major drawback of this ``data splitting" method was that both $\mathbf{x}_i+\mathbf{n}_{i1}$ and $\mathbf{x}_i+\mathbf{n}_{i2}$ were noisier than the original noisy image, which restricted the quality of denoised images. To efficiently aggregate both noise realizations, they should be taken into consideration in the same loss function. According to theorem \ref{theorem_eq}, such loss function can be designed under the Noise2clean framework first, then substituting clean images $\mathbf{x}_i$ with its noisy version when appropriate. The following Noise2clean model was considered to derive our consensus loss:
\begin{equation}
\begin{split}
L_c & = \frac{1}{N}\sum_{i=1}^N\norm{\frac{f(\mathbf{x}_{i}+\mathbf{n}_{i1};\mathbf{\Theta}_1) + f(\mathbf{x}_{i}+\mathbf{n}_{i2};\mathbf{\Theta}_2)}{2} - \mathbf{x}_i}_2^2 \\
       & = \frac{1}{N}\sum_{i=1}^N\left\{ \frac{1}{2}\norm{\mathbf{y}_{i1} - \mathbf{x}_i}_2^2 + \frac{1}{2}\norm{\mathbf{y}_{i2} - \mathbf{x}_i}_2^2 - \frac{1}{4}\norm{\mathbf{y}_{i1} - \mathbf{y}_{i2}}_2^2  \right\}, 
\end{split}
\end{equation}
where $\mathbf{y}_{i1} := f(\mathbf{x}_{i}+\mathbf{n}_{i1};\mathbf{\Theta}_1)$ and $\mathbf{y}_{i2} := f(\mathbf{x}_{i}+\mathbf{n}_{i2};\mathbf{\Theta}_2)$. $\mathbf{\Theta}_1, \mathbf{\Theta}_2 \in \mathbb{R}^m$ were the trainable variables. The second equivalence was just simple factorization. This model effectively aggregated $\mathbf{x}_i+\mathbf{n}_{i1}$ and $\mathbf{x}_i+\mathbf{n}_{i2}$ by taking both of them into consideration in the same loss function.

Similar to theorem \ref{theorem_eq}, $\mathbf{x}_{i}$ can be substituted with $\mathbf{x}_i + \mathbf{n}_{i1}$ or $\mathbf{x}_i + \mathbf{n}_{i2}$ during training, and our Noise2noise consensus loss is given as:
\begin{equation}\label{consensus_loss}
\begin{split}
L_n = \frac{1}{N}\sum_{i=1}^N\biggl\{ &\frac{1}{2}\norm{f(\mathbf{x}_{i}+\mathbf{n}_{i1};\mathbf{\Theta}_1) - (\mathbf{x}_i + \mathbf{n}_{i2})}_2^2  + \\
						&\frac{1}{2}\norm{f(\mathbf{x}_{i}+\mathbf{n}_{i2};\mathbf{\Theta}_2) - (\mathbf{x}_i + \mathbf{n}_{i1})}_2^2 - \\
						&\frac{1}{4}\norm{f(\mathbf{x}_{i}+\mathbf{n}_{i1};\mathbf{\Theta}_1) - f(\mathbf{x}_{i}+\mathbf{n}_{i2};\mathbf{\Theta}_2)}_2^2\biggr\}
\end{split}
\end{equation}
and the denoised image is given by:
\begin{equation}\label{recon_image}
\mathbf{z}_i = \frac{ f(\mathbf{x}_{i}+\mathbf{n}_{i1};\mathbf{\Theta}_1) + f(\mathbf{x}_{i}+\mathbf{n}_{i2};\mathbf{\Theta}_2)}{2}
\end{equation}

\subsection{Regularization}\label{sec_regularization}
In practice, $\mathbf{n}_{i1}$ and $\mathbf{n}_{i2}$ could be correlated due to aliasing related to structures, and $N$ may not be large enough in mini-batch training. Two regularization terms were added beside $L_n$ for artifacts reduction and detail preservation.

The first term was weight decay defined as:
\begin{equation}
L_w = \norm{\mathbf{\Theta}_1}_2^2 + \norm{\mathbf{\Theta}_2}_2^2,
\end{equation}
which was effective at eliminating artifacts due to noise correlation. 

The second term was image consistency:
\begin{equation}\label{eq_consist}
L_r = \frac{1}{N}\sum_{i=1}^N\norm{\mathbf{z}_i - \mathbf{x}_i^{est}}_2^2,
\end{equation}
where $\mathbf{z}_i$ is given by (\ref{recon_image}) and $\mathbf{x}_i^{est}$ is an estimation of $\mathbf{x}_i$. It could be the original noisy images, or reconstructed with artifacts-suppressing algorithms. 

The final loss function for training was
\begin{equation}
L = L_n + \beta_w L_w + \beta_r L_r,
\end{equation}
where $\beta_w$ could be tuned first with $\beta_r=0$ to remove artifacts, then $\beta_r$ could be tuned for image details with fixed $\beta_w$. The denoised image was given by (\ref{recon_image}).

\section{Experiments}

\subsection{Data Preparation}
We validated the proposed methods on two datasets: LDCT Grand Challenge \cite{aapm2017low} and NYU MR images \cite{hammernik2018learning}. Both datasets have high quality images which provided reference for evaluation. 

The LDCT dataset consisted of abdomen CT scans from 10 patients. Quarter-dose raw data was synthesized from the acquisitions by realistic noise insertion. The raw data was rebined to multi-slice fanbeam sinogram for image reconstruction. We randomly chose 50 slices from each patients for our study, and used 8 patients for training with the other 2 patients for testing. For each slice, we split the 2304 quarter-dose projections to odd and even set and used filtered backprojection (FBP) with Hann filter to reconstruct the $\mathbf{x}_i+\mathbf{n}_{i1}$ and $\mathbf{x}_i+\mathbf{n}_{i2}$ required for the training of consensus network.

The NYU MR dataset used in this study consisted of sagittal knee MR scans with Turbo Spin Echo sequence from 20 patients. 4$\times$ catesian downsampling was synthesized by downsampling the original kspace data with different random masks for each slice. The central 48 lines were kept for aliasing reduction and sensitivity map estimation. Each patients had 31 to 35 slices and we randomly chose 16 patients for training with 4 patients for testing. For each patient, the sampled lines were further randomly split to two sets with 8$\times$ dwonsampling each. The central 48 lines were kept for both sets. 

Zero-filling was used to reconstruct $\mathbf{x}_i+\mathbf{n}_{i1}$ and $\mathbf{x}_i+\mathbf{n}_{i2}$ due to its linear property so that the artifacts was zero-mean. The random sampling part of the kspace data was also amplified by a factor of 8 to enforce the zero-mean property. 

\subsection{Parameters}
We used UNet \cite{jin2017deep}\cite{Ronneberger2015unet} as $f(\mathbf{x}; \mathbf{\Theta})$ in all the studies. 32 and 64 basic featuremaps were used for LDCT and NYU MR datasets respectively. For MR study, the real and image part of the images were fed to the network as two channels. The $\mathbf{x}_i^{est}$ in CT and MR studies were quarter-dose FBP results and SENSE  \cite{pruessmann1999sense} results respectively. The hyperparameters, $\beta_w$ and $\beta_r$, were tuned according to section \ref{sec_regularization}. LDCT study used $\beta_w = 5\times10^{-6}$ and $\beta_r=0.5$. MR study used $\beta_w=1\times10^{-6}$ and $\beta_r=5$.

The networks were trained on $96\times 96$ local patches with mini-batch size of 40. At each epoch, 40 patches were randomly extracted from each training slice. The networks were trained by Adam algorithm with learning rate of $10^{-4}$ for 100 epochs. CT images were normalized to HU / 1000 and MR images were normalized to [-1, 1] before being fed to the networks. 

\section{Results}

The root mean squares error (RMSE) and structural similarity index (SSIM) on the testing quarter-dose CT dataset are given in table \ref{tab_ct}, and one of the testing slices with two lesions is given in figure \ref{fig_ct}. Beside the proposed method, we also gave the results from quarter-dose FBP, iterative reconstruction with total variation (TV) minimization \cite{sidky2008image}, original Noise2noise (\ref{noise2noise}) and Noise2clean (\ref{noise2clean}). All the methods did not require clean images for training or testing except for the Noise2clean, which was supervised learning. 

\begin{figure}[t]
\begin{center}
\includegraphics[width=0.8\textwidth]{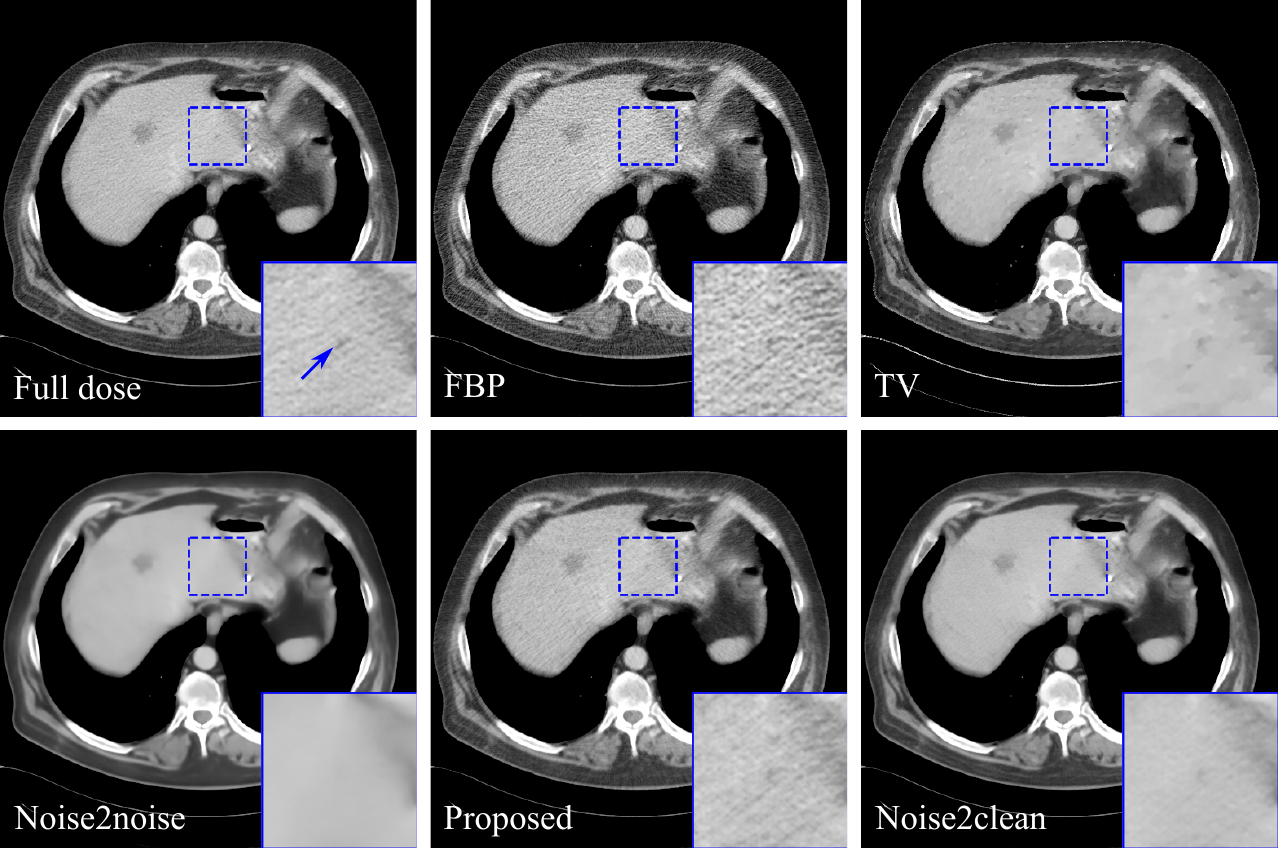}
\end{center}
\caption{A testing CT slice processed with different methods. A small lesion is marked with blue arrow on the full-dose image. The display window is [-160, 240] HU.} \label{fig_ct}
\end{figure}

\begin{table}[t]
\caption{RMSEs and SSIMs of the testing CT images. SSIMs were calculated within the liver window [-160, 240] HU.
Noise2clean required clean images for training.}\label{tab_ct}
\begin{center}
\begin{tabular}{p{2cm}p{1.8cm}p{1.8cm}p{1.8cm}p{1.8cm}p{1.8cm}}
\hline
Index & FBP & TV & Noise2noise & Proposed & Noise2clean\\
\hline
RMSE(HU) & $27.6\pm5.3$ & $36.9\pm4.4$ & $17.4\pm3.2$ & $18.4\pm2.9$ & $14.7\pm2.8$ \\
SSIM & $81.5\pm4.7$ & $81.8\pm3.5$ & $84.2\pm4.6$ & $87.1\pm3.4$ & $87.7\pm3.3$ \\
\hline
\end{tabular}
\end{center}
\end{table}

The proposed method achieved the best SSIM among all the unsupervised methods. Although Noise2noise had lower RMSE compared to the proposed method, its images were significantly oversmoothed as shown in figure \ref{fig_ct}. The higher RMSE of the proposed methods were mainly due to mismatch in noise patterns in the reference images and the proposed results. The proposed method preserved both the small lesion structure and textures compared to TV and Noise2noise. The structural details recovered by the proposed method were very similar to that by the Noise2clean training, and they achieved close SSIMs. 

The 4$\times$ downsampling MR testing results are given in table \ref{tab_mr} and figure \ref{fig_mr}. Results from Zero-filling, SENSE, Noise2noise and Noise2clean are given beside the proposed method. 

\begin{figure}[t]
\begin{center}
\includegraphics[width=0.8\textwidth]{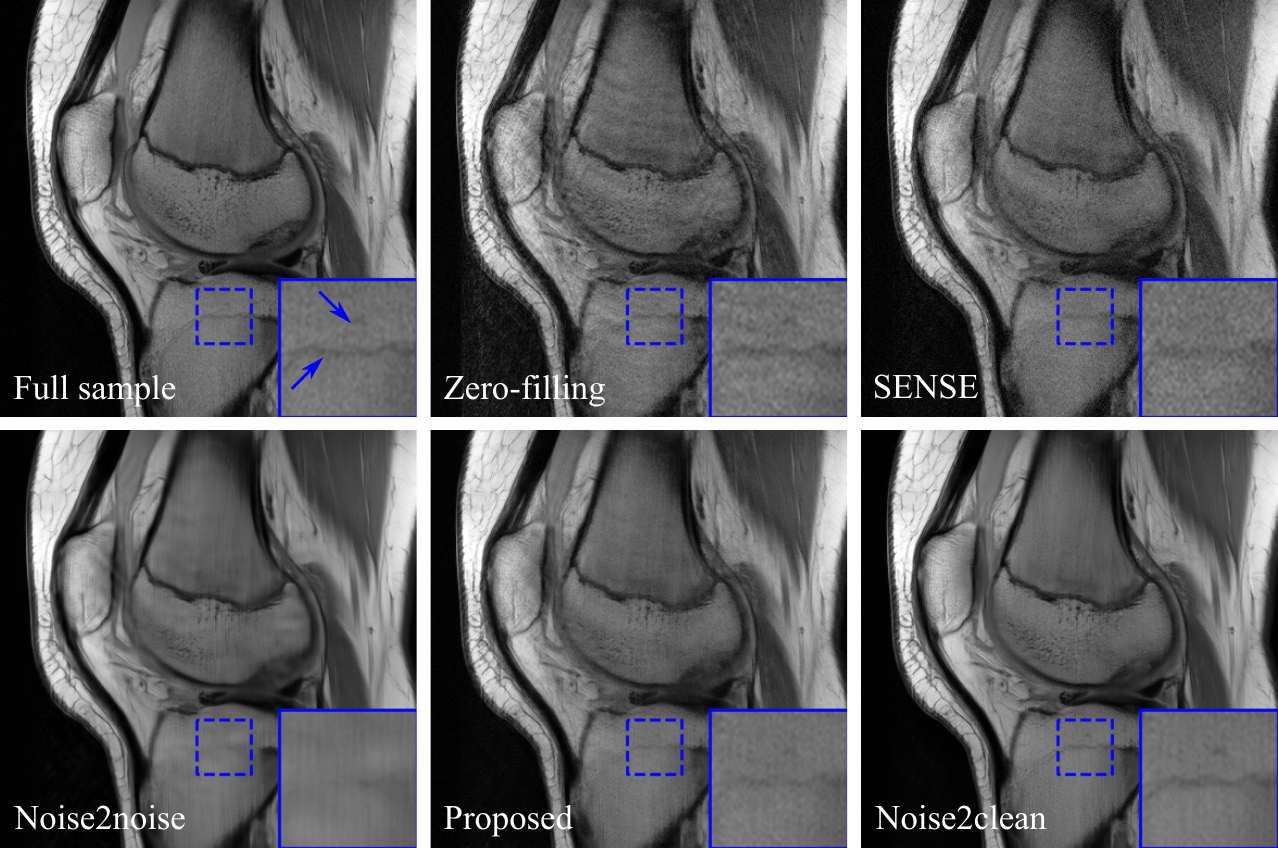}
\end{center}
\caption{Amplitude images of a testing MR slice processed with different methods. Structures are marked with blue arrows on the fully sampled image.} \label{fig_mr}
\end{figure}

\begin{table}[t]
\caption{RMSEs and SSIMs of the testing MR images. SSIMs were calculated for the amplitude images.
Noise2clean required clean images for training.}\label{tab_mr}
\begin{center}
\begin{tabular}{p{2cm}p{1.8cm}p{1.8cm}p{1.8cm}p{1.8cm}p{1.8cm}}
\hline
Index & Zero-filling & SENSE & Noise2noise & Proposed & Noise2clean\\
\hline
RMSE($10^{-3})$ & $36.3\pm7.6$ & $38.4\pm4.6$ & $27.7\pm6.3$ & $23.4\pm4.2$ & $21.0\pm4.5$ \\
SSIM & $90.1\pm1.9$ & $86.5\pm1.1$ & $94.6\pm1.3$ & $95.2\pm0.6$ & $96.4\pm0.8$ \\
\hline
\end{tabular}
\end{center}
\end{table}

In the MR study, the proposed method achieved the best RMSE and SSIM among all the unsupervised methods. Noise2noise results were still oversmoothed. The proposed method preserved detailed structures of the knee compared to Noise2noise, and had significantly reduced artifacts and noise compared to zero-filling and SENSE. Compared to Noise2clean result, the proposed method recovered almost the same amount of structures despite of some slight loss in contrast. The artifacts and noise level were also similar for the two results. 

\section{Conclusion and Discussion}
In this paper we proposed an unsupervised learning method for medical image denoising which only required noisy samples during training. A novel theorem was proposed for the Noise2noise framework. Our consensus network was proposed based on the theorem with novel framework and loss functions designed for medical imaging. The proposed loss function efficiently utilized both noise realizations of the same object and achieved improved performance compared to Noise2noise under the same framework. The proposed method achieved better performance than other unsupervised methods. Its image quality was close to that of supervised denoising networks. 

The proposed method had weak assumption on the property of noise and images. It worked for both local noise in CT and non-local undersampling artifacts in MR. The noise was only required to be zero-mean and independent in the two realizations. Whereas zero-mean can be guaranteed with appropriate reconstruction algorithms, the independence property could be breached due to various factors. The artifacts caused by correlations were successfully compensated with the proposed regularizations. 

Although the proposed method need to split the raw data to create the training dataset, its deployment can work in image domain only. After the consensus network was trained, another network could be trained to map noisy images to the output of consensus network in a conventional supervised manor. Hence, the networks' deployment do not need to interfere existing workflow of scanners. 

We achieved promising results on existing public datasets where high-quality reference images were available, which provided reliable evaluation of the method's performance. The method can be applied to more challenging applications where only noisy images are available, such as dynamic PET, ASL MR, and spectral CT. Furthermore, the method itself could also be improved, such as replacing the image consistency (\ref{eq_consist}) with data consistency; or replacing UNet with reconstruction networks.

%
%
%

\begin{thebibliography}{8}
\bibitem{jin2017deep}
Jin, K.H., et al.: Deep convolutional neural
  network for inverse problems in imaging. IEEE Trans. Image Process.  \textbf{26}(9),  4509--4522 (2017)
  
\bibitem{kang2017deep}
Kang, E., et al.: A deep convolutional neural network using directional wavelets for low-dose X-ray CT reconstruction. Med. Phys. \textbf{44}(10), e360--e375 (2017)

\bibitem{sidky2008image}
Sidky, E.Y., Pan, X.: Image reconstruction in circular cone-beam computed
  tomography by constrained, total-variation minimization. Phys. Med. Biol.
  \textbf{53}(17),  4777--4807 (2008)

\bibitem{Niu2014Iterative}
Niu, T., et al.: Iterative image‐domain decomposition for dual‐energy CT. Med. Phys., \textbf{41}, 041901 (2014)

\bibitem{Bibic2010denoising}
Bibic, A., et al.: Denoising of arterial spin labeling data: wavelet-domain filtering compared with Gaussian smoothing. Magn. Reson. Mater. Phys., Biol. Med. \textbf{23}(3), 125--137 (2010)

\bibitem{pmlr-v80-lehtinen18a}
Lehtinen, J., et al.: {N}oise2{N}oise: Learning image restoration without clean data. Proc. Mach. Learn. Res. \textbf{80}, 
  2965--2974 (2018)

\bibitem{aapm2017low}
McCollough, C.H., et al.: Low-dose CT for the
  detection and classification of metastatic liver lesions: Results of the 2016
  low dose CT grand challenge. Med. Phys.  \textbf{44}(10),  e339--e352 (2017)

\bibitem{hammernik2018learning}
Hammernik, K., et al.: Learning a variational network for reconstruction of accelerated
  MRI data. Magn. Reson. Med.  \textbf{79}(6),  3055--3071 (2018)
  
\bibitem{Ronneberger2015unet}
Ronneberger, O., et al.: U-net: convolutional networks for biomedical image segmentation. Med. Image. Comput. Comput. Assist. Interv., 234--241 (2015)

\bibitem{pruessmann1999sense}
Pruessmann, K.P., et al.: Sense:
  sensitivity encoding for fast MRI. Magn. Reson. Med.
  \textbf{42}(5),  952--962 (1999)


\end{thebibliography}
%

\end{document}